\documentclass[conference]{IEEEtran} 
\IEEEoverridecommandlockouts
\def\BibTeX{{\rm B\kern-.05em{\sc i\kern-.025em b}\kern-.08em
    T\kern-.1667em\lower.7ex\hbox{E}\kern-.125emX}}
    
\setcounter{footnote}{0}
\setcounter{section}{0}

\usepackage{verbatim}
\usepackage{multirow}
\usepackage[dvips]{graphics} 
\usepackage{graphicx} 
\usepackage{subfigure}
\usepackage{times} 
\usepackage[font=small,labelfont=bf]{caption} 
\usepackage{epsfig}
\usepackage{float}
\usepackage[utf8]{inputenc}
\usepackage[english]{babel} 
\usepackage[noadjust]{cite}
\usepackage{enumerate}
\usepackage{amssymb}
\usepackage{amsthm}
\usepackage{multirow}
\usepackage{latexsym}
\usepackage{color}
\usepackage{float}
\usepackage{amsmath}      
\usepackage{cite}
\usepackage[noend]{algpseudocode}
\usepackage{epstopdf}
\usepackage{tikz}
\usepackage{algorithm}
\usepackage{mathtools}

\allowdisplaybreaks
\usepackage{etoolbox}
\newtheorem{theorem}{Theorem}
\newtheorem{remark}{Remark}
\makeatletter
\patchcmd{\@begintheorem}{\textit}{\textbf}{}{}

\usepackage{xcolor}
\usepackage{color}

\usepackage{mathtools, cuted}
\usepackage{lipsum, color}
\usepackage{amssymb,amsmath,amsthm,enumitem}


\newcommand*{\algrule}[1][\algorithmicindent]{%
  \makebox[#1][l]{%
    \hspace*{.2em}
    \vrule height .75\baselineskip depth .25\baselineskip
  }
}

\newcount\ALG@printindent@tempcnta
\def\ALG@printindent{%
    \ifnum \theALG@nested>0
    \ifx\ALG@text\ALG@x@notext
    \else
    \unskip
    \ALG@printindent@tempcnta=1
    \loop
    \algrule[\csname ALG@ind@\the\ALG@printindent@tempcnta\endcsname]%
    \advance \ALG@printindent@tempcnta 1
    \ifnum \ALG@printindent@tempcnta<\numexpr\theALG@nested+1\relax
    \repeat
    \fi
    \fi
}
\patchcmd{\ALG@doentity}{\noindent\hskip\ALG@tlm}{\ALG@printindent}{}{\errmessage{failed to patch}}
\patchcmd{\ALG@doentity}{\item[]\nointerlineskip}{}{}{} 
\makeatother 
\usepackage[T1]{fontenc}

\hyphenation{optical networks semiconductor}

\usepackage[font=small,skip=0pt]{caption}

\usepackage{lineno}



\begin{document}

\title{Analysis and Optimization of the Latency Budget in Wireless Systems with Mobile Edge Computing} 
\author{\IEEEauthorblockN{  Suraj Suman\IEEEauthorrefmark{1},   \v Cedomir Stefanovi\' c\IEEEauthorrefmark{1},    Strahinja Do\v sen\IEEEauthorrefmark{2}, and Petar Popovski\IEEEauthorrefmark{1}\\
\IEEEauthorblockA{
\IEEEauthorrefmark{1}Department of Electronic Systems, Aalborg University, Denmark}} 
\IEEEauthorblockA{
\IEEEauthorrefmark{2}Department of Health Science and Technology, Aalborg University, Denmark \\
Email: \{ssu,   cs\}@es.aau.dk, sdosen@hst.aau.dk, petarp@es.aau.dk
}
}
\maketitle

\begin{abstract}
We present a framework to analyse the latency budget in wireless systems with Mobile Edge Computing (MEC). 
Our focus is {on teleoperation and telerobotics, as use cases that are representative of mission-critical uplink-intensive IoT systems with requirements on low latency and high reliability. The study is motivated by a general question: \emph{What is the optimal compression strategy in reliability and latency constrained systems?}}
{We address this question by studying the latency of an uplink connection from a multi-sensor} IoT device to the base station. {This is a} critical link tasked with a timely and reliable transfer of potentially significant amount of data from the multitude of sensors.
We {introduce a comprehensive model for the latency budget, } incorporating data compression and data transmission. {The uplink} latency is a \emph{random variable} whose distribution depends on the computational capabilities of the device and on the properties of the wireless link.
 {We formulate two optimization problems corresponding to two transmission strategies: (1) \emph{Outage-constrained,} and (2) \emph{Latency-constrained}. We derive the optimal system parameters under a reliability criterion.
We show that the obtained results are superior compared to the ones based on the optimization of the expected latency. }


\end{abstract}

\begin{IEEEkeywords}
mission-critical communications, teleoperation, telerobotics, mobile edge computing, low-latency high-reliability
\end{IEEEkeywords}

\section{Introduction}
\label{sec:introduction}


The recent advancements in wireless networking and computing systems pave the way for novel mission-critical Internet-of-Things (IoT) use-cases that rely on  {Mobile Edge Computing (MEC)}. {Representative use cases include 
teleoperation and telerobotics, both  characterized by an uplink-intensive communication from a multi-sensory IoT device, comprising audio, video and haptic data traffic~\cite{Fettweis_TI}. This traffic is often mission-critical, such that the key requirements include
low latency and high reliability.}
However, a wearable IoT device that streams the {multi-sensory data in the uplink is resource-constrained, not least with respect to the available computational power. Furthermore, reliability and latency are challenged by the variable and error-prone wireless link.}

In this paper, we examine a scenario in which a device located in the remote environment interacts with the {Base Station (BS), equipped with a MEC server.}
We focus on the segment of the uplink connection between the device and the BS, and analyse the uplink latency that comprises the time elapsed in data compression followed by the transmission.
The key contributions of this work are: 
\begin{enumerate}
\item {We derive a} tractable model of the uplink latency {as a random variable (RV),} relating the lossless-compression ratio and link-outage probability.
{Specifically, we obtain the probability distribution of the latency.}  
\item {We consider two different transmission strategies: (1) \emph{Outage-constrained transmission,} and (2) \emph{Latency-constrained transmission} and formulate the respective optimization problems. These problems are shown to be non-convex, and are transformed into convex ones. This allows to find the optimal latency and optimal outage, respectively.}
\item The obtained results are compared with the ones of the analogous optimization problem that {are tailored to the expected values for latency, as done in the prior works. The results show that the proposed approach is superior in terms of reliability and latency.}
\item {Interestingly, } the results  reveal that the data compression is not always beneficial {and its utility} depends on the computational capability of the device and the reliability requirements. 
\end{enumerate}

The rest of the paper is organised as follows. 
This section is concluded by a brief overview of the related work.
Section~\ref{sec:syst_model} introduces the system model.
Section~\ref{sec:analysis} models  the uplink latency and derives its probability distribution function (PDF).
The optimization problems relevant for the system design are discussed in Section~\ref{sec:optimization}.
The evaluation is presented in Section~\ref{sec:evaluation}, followed by concluding remarks in Section~\ref{sec:conclusions}.

\subsection*{Related Work}

Time delay significantly affects the performance of teleoperation systems~\cite{effect_of_delay}. 
The use of fiber-wireless (FiWi) networks have been envisioned in~\cite{onu_1,onu_3} to reduce latency in teleoperation applications,  where the wireless front ends, such as base stations and WiFi access points, are integrated with optical network units.
These works consider the average end-to-end delay while analysing the latency-constrained teleoperation scenario.  
In~\cite{NFV_JSAC}, 5G network architecture based on the network function virtualization (NFV) technology is presented to support the implementation of Tactile Internet (TI) applications.
Using NFV-based TI architecture, a utility function-based model is reported in~\cite{IoT_TI} to evaluate the performance of the NFV-based TI by considering the resolution of the human perception and the network cost of completing services.
The utility function depends on the average round-trip delay, network link bandwidth, and node virtual resource consumption.

Recently, mobile edge computing (MEC) was used in a TI application~\cite{MEC_TII, hybrid_caching_TI, TelSurg}, where the tasks are offloaded to a MEC server for processing. 
An MEC-based TI system was designed in~\cite{MEC_TII} to satisfy the quality-of-experience. 
A hybrid edge-caching scheme for heterogeneous TI network is presented in \cite{hybrid_caching_TI}, where the average end-to-end latency, comprising transmission times between users to edge nodes and edge nodes to central cloud servers, is optimized.
A real time  architecture for telesurgery application is presented in~\cite{TelSurg}, where real-time telesurgery is envisioned by employing cloud and MEC networks using software-defined networking as infrastructure to satisfy the average end-to-end latency.  


The works reported in \cite{onu_1,onu_3,NFV_JSAC,IoT_TI,MEC_TII, hybrid_caching_TI, TelSurg} treat the latency through its expected value, which poses limitations on the applicability of these approaches.
Specifically, as shown in the present paper, the latency is a RV, which also impacts the communication reliability.
Consequentially, system design {that relies upon the optimization of} the expected latency does not ensure high-reliability, as it leads to underprovisioning. 
Furthermore, the reliability of data transmission with tolerable packet loss has also not been considered.
We also note that, to the best of our knowledge,  {in these works} the data-compression latency {has neither been} explored nor modelled.

\section{ System Model }
\label{sec:syst_model}
The system model is shown in Fig.~\ref{fig:sys_mod_MEC_TI}.
It consists of  {a multi-sensory IoT device, such as a robot,} located in a remote environment and a BS  equipped with MEC server, communicating over a wireless channel.  
The BS collects the data from the device and process it on the MEC server. 
The volume of the data may be significant, potentially requiring compression before the data is transmitted. 
Thus, the total  {latency budget in the uplink consists of parts pertaining to both data compression and transmission.}
The latency of data compression depends upon the compression ratio and the processing capability of the device. 
Regarding the data transmission, the randomness in wireless channel restricts the data rate that can be transmitted reliably.
In the next section, we perform analysis of the total uplink latency, which is the critical contributor to the total latency in the system.


\begin{figure}[!t] 
\centering 
{{\includegraphics[width=0.75\columnwidth]{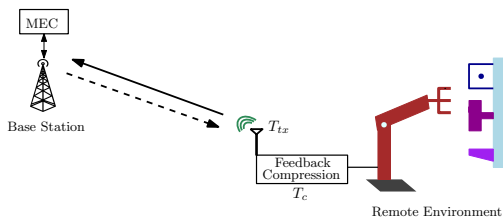}}} 
\vspace{0.25cm} 
\caption{System model; the focus  of the paper is on the uplink connection ($T_c$: data compression latency, $T_{tx}$: data transmission latency).} 
\label{fig:sys_mod_MEC_TI} 
\end{figure}  

\section{ Latency Analysis}
\label{sec:analysis}

\subsection{Latency in Data Compression } 

The latency of data compression depends on the data volume and computational properties of the device's processor. 
Specifically, the time elapsed $T_\text{c}$ in compressing volume of data $D_0$ is given as \cite{jsac_comp_model} 
\begin{equation} \label{eq:comp_time}
T_\text{c} = \frac{D_0 X}{f_\text{R}}
\end{equation}
where $X$ is the number of CPU cycles required to compress one bit of data, and $f_\text{R}$ is the frequency (i.e., clock speed) of the processor.
A recent study shows that $X$ is stochastic in nature~\cite{MEC_random_var}, i.e., it is a RV that follows the Gamma distribution $X \sim \mbox{Gamma}(\kappa, \beta)$~\cite{MEC_gamma_1, MEC_gamma_2}.
Specifically, 
\begin{equation} \label{eq:no_of_cycles}
f_X(x) = \frac{1}{ \beta^\kappa \Gamma(\kappa)} x^{\kappa-1} \exp(-{x}/{\beta})
\end{equation}
where $\kappa$ and $\beta$ are respectively the shape and scale parameters, and $\Gamma(s)= \int_{0}^{\infty} t^{s-1} e^{-t} dt$ is the Gamma function.
Note that $\mathbb{E}[X] = \kappa  \beta$.

Thus, $T_\text{c}$ is also a RV, whose PDF is derived as
\begin{equation} \label{eq:t_mec} 
\begin{split}
    f_{T_\text{c}}(t)  &   = \left( \frac{f_{R}}{D_0 \beta}\right)^\kappa \frac{1}{\Gamma(\kappa)} t^{\kappa-1} \exp \left(-\frac{t f_\text{R}}{D_0  \beta}\right).   
\end{split}
\end{equation}
We assume that lossless compression is performed, so that the original, raw data can be reconstructed perfectly.\footnote{Techniques like Huffman, run-length, and Lempel-Ziv encoding efficiently and losslessly compress the raw data~\cite{compression_survey}.}
For lossless compression, the average number of CPU cycles required  to compress one bit of raw data is given as~\cite{comm_lett_comp,jsac_comp_model} 
\begin{equation}  \label{eq:compresiion_ratio}
\mathbb{E}[X] = \kappa \beta = \exp(Q \psi) - \exp(\psi) = C(Q)
\end{equation} 
where $Q\geq 1$ is the compression ratio (i.e., the ratio of the sizes of raw and compressed data) and $\psi$ is a positive constant. 
Using \eqref{eq:compresiion_ratio}, the PDF of compression time ${T_\text{c}}$ becomes 
\begin{equation} \label{eq:t_mec} 
\begin{split}
    f_{T_\text{c}}(t; Q)  &   =   \left( \frac{f_\text{R}}{D_0 \frac{C(Q)}{\kappa}}\right)^{\kappa} \frac{1}{\Gamma(\kappa)} t^{\kappa-1} \exp \left(-\frac{t f_\text{R}}{D_0  \frac{C(Q)}{\kappa}}\right)   
\end{split}
\end{equation}
The cumulative distribution function (CDF) of ${T}_\text{c}$ for compression ratio $Q$ is given as
\begin{equation}
F_{T_\text{c}}(t; Q) = \frac{\Gamma \left(\kappa,  \frac{t}{\frac{ D_0 C(Q)}{\kappa f_R }} \right)}{\Gamma(\kappa)}. 
\end{equation}
The expected value of time elapsed in compression for compression ratio $Q$ is given as 
\begin{equation} 
\bar{T}_\text{c}(Q) = \mathbb{E}[T_\text{c}] = \frac{D_0 \mathbb{E}[X]}{f_\text{R}}  =  \frac{D_0 C(Q)}{f_\text{R}}.
\end{equation}

\subsection{ Latency in Data Transmission }

The wireless channel is assumed to feature a quasi-static fading, where the channel gain $h$ is a Rayleigh RV independently and identically distributed over the time-slots.
Hence, the channel power $g=|h|^2$ follows the exponential distribution, and we assume that $g \sim \exp(1)$.   
The device transmits with power $P_\text{tx}$, so the signal-to-noise ratio ($\mbox{SNR}$) $\gamma$ at the receiver located at distance $d$ away is given as 
\begin{align}   \label{eq:snr}
\gamma(P_\text{tx},d, B) & = \frac{\mathcal{K}_0 P_\text{tx} |h|^2}{  d^2 N_0 B} =  \gamma_0(P_\text{tx},d, B) \cdot g 
\end{align}
where $\mathcal{K}_0$ is a constant accounting the Friss equation parameter, $N_0$ is the power spectral density of noise,  $B$ is the allocated bandwidth, and  $\gamma_0(P_\text{tx}, d, B) =  \frac{P_\text{tx}}{\mathcal{K}_0 d^2 N_0 B}$.    

Further, we assume that the transmitter wants to {send} at the rate that guarantees $\epsilon$-outage at the receiver~\cite{goldsmith2005wireless}; this rate {depends on the statistics of the }SNR.
The outage probability $\epsilon$ characterizes the probability of data loss in case of deep fading, when the transmission cannot be decoded.
 We assume that the data is correctly received if the instantaneous received SNR is not lower than $\gamma_\text{th}$;  {otherwise, an outage is declared.}
For a threshold SNR $\gamma_\text{th}$ with outage $\epsilon$, the rate $R(\epsilon)$ is 
\begin{equation}  \label{eq:rate}
R(\epsilon) = \log_2(1+\gamma_\text{th})
\end{equation}
where  outage probability $\epsilon$ is given as
 \begin{equation}\label{eq:epsilon}
 \epsilon = \Pr \left[ \gamma < \gamma_\text{th} \right] = \Pr \left[ g < \frac{\gamma_\text{th}}{\gamma_{0}} \right] = 1 - \exp\left(-\frac{\gamma_\text{th}}{\gamma_{0}}\right).
\end{equation}
From \eqref{eq:rate} and \eqref{eq:epsilon},  $R(\epsilon)$ can be rewritten as 
\begin{equation} \label{eq:rate_epsilon} 
R(\epsilon) = \log_2(1+\gamma_\text{th}) = \log_2 \left( 1+\gamma_{0} \ln \left(\frac{1}{1-\epsilon} \right) \right).
\end{equation}

The number of channel uses $N_t$ required  to transmit data volume $D$ with outage $\epsilon$ is given as
\begin{equation}  \label{eq:channel_use} 
N_\text{tx}(\epsilon) = \frac{D}{R(\epsilon)} 
\end{equation}
and the time elapsed in transmitting is 
\begin{equation} \label{eq:tx_time} 
T_\text{tx}(\epsilon) = T_0 N_\text{tx}(\epsilon)   
\end{equation}
where $T_0$ denotes the duration of a channel use and is determined by the available bandwidth that is fixed. 

\begin{remark} \label{rem:rate_reliability}
$R(\epsilon)$ is an increasing function of $\epsilon$, and hence the transmission time $T_{tx}(\epsilon)$.      
\end{remark}

\subsection{Total Uplink Latency} 

Denote the volume of the data at the device by $D_\text{f}$, which gets compressed at the device itself. 
The volume of compressed data $D_\text{c}$ is given as
\begin{equation}
D_\text{c}(Q) = \frac{D_\text{f}}{Q}. 
\end{equation}

From \eqref{eq:tx_time}, the time elapsed in transmitting the compressed data with outage $\epsilon$ is given as
\begin{equation}
T_\text{tx}( \epsilon) = \frac{D_\text{c}}{R(\epsilon)}T_0  = \frac{D_\text{f}}{Q R(\epsilon)} T_0 .
\end{equation}
The total latency incurred in compression and transmission is 
\begin{equation}
T (Q, \epsilon) =  T_\text{c}(Q) + T_\text{tx}(\epsilon) 
\end{equation}

$T$ is a RV, because $T_\text{c}$ is a RV, see \eqref{eq:t_mec}. 
Using RV transformation techniques, the distribution of $T$ is obtained as  
\begin{align} \label{eq:t_total}  
    f_{T}(t; Q, \epsilon)  =  &  
     \left( \frac{f_\text{R}}{D_f \frac{C(Q)}{\kappa}} \right)^\kappa \frac{1}{\Gamma(\kappa)} \left(t - \frac{D_\text{f}}{Q R(\epsilon)}T_0 \right)^{\kappa-1} \times \nonumber \\
     & \exp \left( -\frac{ \left( t - \frac{D_\text{f}}{Q R(\epsilon)}T_0 \right) f_\text{R}}{D_f  \frac{C(Q)}{\kappa}} \right)   .
\end{align}
The expected latency $\bar{T}$ of the compression and transmission process is given as
\begin{align} \label{eq:lat_exp} 
\bar{T}(Q, \epsilon) =  \mathbb{E}[T(Q, \epsilon)] 
 =  D_\text{f} \left[  \frac{C(Q)}{f_\text{R}} + \frac{T_0}{Q R(\epsilon)} \right ] .
\end{align}
The CDF of $T$ is given as
\begin{equation} \label{eq:cdf_gamma}
F_T(t) =   \frac{\Gamma(\kappa,  \tau(t))}{\Gamma(\kappa)} \;\; \mbox{with}\;\; \tau(t) =  \frac{t - \frac{D_\text{f}}{Q R(\epsilon)}T_0}{\frac{ D_\text{f} C(Q)}{\kappa f_\text{R} }}  
\end{equation}
where $\Gamma(s,x)=\int_{0}^{x} t^{s-1} e^{-t} dt$ is the lower incomplete Gamma function.\footnote{The lower incomplete gamma function is usually denoted by $\gamma (s,x)$. However, use of this notation would introduce ambiguity with the notation used in the paper to denote SNR.}

\begin{theorem}
The CDF $F_T(t)$ is neither a convex nor a concave function of  $t$. 
\end{theorem}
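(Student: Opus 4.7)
The plan is to establish the claim by pinpointing a location where the second derivative of $F_T$ changes sign, which simultaneously precludes both convexity and concavity. Writing the left endpoint of the support as $t_0 = D_\text{f} T_0/(Q R(\epsilon))$ and the Gamma scale as $s = D_\text{f} C(Q)/(\kappa f_\text{R})$, the density from \eqref{eq:t_total} is the shifted Gamma
\[
f_T(t) = \frac{(t-t_0)^{\kappa-1}}{s^{\kappa}\,\Gamma(\kappa)}\,\exp\!\left(-\frac{t-t_0}{s}\right), \qquad t>t_0.
\]
Since $F_T'(t)=f_T(t)$, a logarithmic-derivative computation gives
\[
F_T''(t) = f_T(t)\left[\frac{\kappa-1}{t-t_0} - \frac{1}{s}\right],
\]
so the sign of $F_T''$ is completely determined by the bracket.

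For $\kappa>1$ the bracket is strictly positive for $t$ slightly above $t_0$ and strictly negative as $t\to\infty$, vanishing at the inflection point $t^\star = t_0 + (\kappa-1)s$. The coexistence of a convex region $(t_0,t^\star)$ and a concave region $(t^\star,\infty)$ rules out both global convexity and global concavity of $F_T$; finding explicit ordered triples $t_a<t_b<t_c$ within each region that violate the defining inequalities is then mechanical.

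As a cleaner alternative that avoids any case split on $\kappa$, I would invoke two generic properties of a CDF. First, $F_T$ is monotone non-decreasing and uniformly bounded by $1$; if it were convex on $\mathbb{R}$, its derivative $f_T$ would be non-decreasing with finite integral, forcing $f_T\equiv 0$, which contradicts the fact that $T$ is non-degenerate. Second, $F_T(t)=0$ for $t\le t_0$ while $F_T(t_1)>0$ for any $t_1>t_0$; a concave function that vanishes on an entire ray cannot subsequently become strictly positive, ruling out concavity.

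The main obstacle in the calculus-based route is the boundary regime $\kappa\le 1$, where the bracket in the expression for $F_T''$ is sign-definite on $(t_0,\infty)$ and the inflection point disappears; this is exactly the case that the CDF-level argument in the preceding paragraph handles uniformly, so combining the two routes yields the result under no restriction on $\kappa$.
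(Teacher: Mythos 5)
Your primary route is the same as the paper's: compute $F_T''$, observe that its sign is governed by the bracket $\frac{\kappa-1}{t-t_0}-\frac{1}{s}$, and conclude from the sign change at the inflection point that $F_T$ is neither convex nor concave. Two things distinguish your write-up favourably. First, your expression for $F_T''$ correctly retains the Gamma scale $s = D_\text{f}C(Q)/(\kappa f_\text{R})$, whereas the paper's displayed second derivative effectively sets $s=1$ (writing $e^{-(t-t_0)}$ and placing the inflection at $t_0+\kappa-1$ rather than $t_0+(\kappa-1)s$); the conclusion is unaffected but your version is the accurate one. Second, the paper's argument tacitly requires $\kappa>1$ (as in its numerics, $\kappa=1.5$): for $\kappa\le 1$ the bracket is negative throughout the support and $F_T$ is in fact concave on $(t_0,\infty)$, so the sign-change argument collapses. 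Your supplementary CDF-level argument (a bounded convex function on $\mathbb{R}$ is constant; a concave function vanishing on a ray cannot become positive) repairs this by viewing $F_T$ on all of $\mathbb{R}$ rather than on the support alone — a more elementary and more general route, though one should note it proves non-concavity only in that extended sense, which is a slightly different claim from what the paper implicitly intends when it restricts attention to $t>t_0$.
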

\begin{proof}
See Appendix A.
\end{proof}

\section{Latency-Optimization Framework}
\label{sec:optimization} 
In this section, we consider an optimization framework in which we assess reliability, defined as the probability that the data is received correctly within certain deadline.
Based on the results in Section~\ref{sec:analysis} and on this reliability criterion, two optimization problems are formulated to investigate the trade-off between outage and compression ratio, as discussed below.


\subsection{Outage-Constrained Uplink} 
In some application scenarios, there is a maximum outage level, say $\epsilon_\text{th}$, that can be tolerated for data transmission, with the known SNR statistics as well as device's computational capability at the receiver. 
Here, the total uplink latency elapsed is minimized by maintaining the tolerated outage level. 
The optimization problem in such scenarios is formulated as
\begin{equation}
\nonumber 
\begin{aligned}
 & \textbf{P1}: \min_{N_\text{tx}, Q}  \;  t    \\
 & \text{s.t.} \;  {\textbf{C1}}:  F_{T}(t) \geq  \rho_\text{th}; \; {\textbf{C2}}:  0 \leq  \epsilon  \leq \epsilon_\text{th}; \; {\textbf{C3}}:  Q \geq 1 \\
\end{aligned}  
\end{equation}
Constraint  ${\textbf{C1}}$ expresses the stochastic nature of the latency, where $\rho_\text{th}$ is the probability that the latency should be at most $\tau$.
Constraint  ${\textbf{C2}}$ restricts the outage level not to exceed $\epsilon_\text{th}$, whereas ${\textbf{C3}}$ indicates the range of compression ratio. 

$\textbf{P1}$ is not a convex optimization problem, since $\textbf{C1}$ is not a convex function (see Theorem 1).  
Now, the inverse of ${\textbf{C1}}$, which is the CDF of the Gamma distribution, is given as
\begin{equation}  \label{eq:inv_gamma}
\tau(t) = F_{T}^{-1}(\rho_\text{th}, \kappa) = \tau_0 .
\end{equation}
Using \eqref{eq:cdf_gamma} and \eqref{eq:inv_gamma}, the latency $t$ is given as
\begin{equation}  \label{eq:mod_P1}
t = \frac{\tau_0 D_\text{f}}{\kappa f_\text{R}} C(Q) + D_\text{f} T_0 \frac{1}{Q R(\epsilon)}  =  W(Q, \epsilon). 
\end{equation}

Using \eqref{eq:mod_P1}, the optimization problem $\textbf{P1}$ can be re-interpreted as follows
\begin{equation}
\nonumber 
\begin{aligned}
 & \textbf{P1a}: \min_{Q, \epsilon}  \;  W(Q, \epsilon), \;\; \text{s.t.} \; {\textbf{C2}}  \;   \mbox{and} \;  {\textbf{C3}}  
\end{aligned}  
\end{equation} 

\begin{theorem}
 $W(Q, \epsilon)$  is a convex function of $Q$ and $\epsilon$. 
\end{theorem}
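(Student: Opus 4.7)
My plan is to decompose $W$ as $W(Q,\epsilon) = W_1(Q) + W_2(Q,\epsilon)$, with $W_1(Q) = \tfrac{\tau_0 D_\text{f}}{\kappa f_\text{R}} C(Q)$ and $W_2(Q,\epsilon) = \tfrac{D_\text{f} T_0}{Q R(\epsilon)}$, and to establish joint convexity of each summand separately. Convexity of $W_1$ in $(Q,\epsilon)$ is immediate: since $C(Q)=e^{Q\psi}-e^{\psi}$ satisfies $C''(Q)=\psi^{2}e^{Q\psi}>0$, $W_1$ is a convex function of $Q$ alone, so its Hessian on $(Q,\epsilon)$-space is positive semidefinite (only a single positive $(Q,Q)$-entry).

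For $W_2$, my preferred route is the log-convexity principle: a strictly positive function whose logarithm is convex is itself convex. The critical observation is that
\[
\log W_2(Q,\epsilon) \;=\; \log(D_\text{f} T_0) \;-\; \log Q \;-\; \log R(\epsilon)
\]
is separable in $Q$ and $\epsilon$. Its Hessian is therefore diagonal, with entries $1/Q^{2}>0$ and $-(\log R)''(\epsilon)$. Non-negativity of the second diagonal entry is precisely the log-concavity of $R(\epsilon)$. Once that is in hand, $\log W_2$ is convex, $W_2$ is log-convex and hence convex, and $W=W_1+W_2$ is convex as a sum of jointly convex functions on $\{Q\ge 1\}\times[0,\epsilon_\text{th}]$.

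The main obstacle will be proving the log-concavity of $R(\epsilon)=\log_2\!\bigl(1+\gamma_0\ln\tfrac{1}{1-\epsilon}\bigr)$. A direct composition argument fails: writing $R(\epsilon)=f(v(\epsilon))$ with $f(x)=\log_2(1+\gamma_0 x)$ concave and increasing and $v(\epsilon)=-\ln(1-\epsilon)$ convex and increasing, neither of the standard convex-composition rules applies. I would instead differentiate $R$ twice via the chain rule in the auxiliary variable $u(\epsilon)=1+\gamma_0 v(\epsilon)$, and verify the log-concavity inequality $(R')^{2}\ge R\,R''$ by direct algebraic manipulation; after cancellation this reduces to an elementary inequality involving $\gamma_0$, $u$, and $\ln u$ that can be checked on the operational range of $\epsilon$. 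With that lemma in place, the two PSD Hessians add to a PSD Hessian for $W$, proving joint convexity.
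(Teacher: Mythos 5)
Your overall architecture differs from the paper's: the paper computes the full Hessian of $W$ but, invoking Remark~1, replaces differentiation in $\epsilon$ by differentiation in $R(\epsilon)$, i.e.\ it really verifies joint convexity of $W$ in the coordinates $(Q,R)$ and never examines the curvature of the map $\epsilon \mapsto R(\epsilon)$. (Monotone reparametrization does not preserve convexity, so the paper's own step is not innocuous either.) Your decomposition $W = W_1 + W_2$ and the log-convexity argument for $W_2 = D_\text{f}T_0/(QR(\epsilon))$ attack convexity in the actual variables $(Q,\epsilon)$, which is the honest reading of the theorem; the $W_1$ part and the reduction of $W_2$'s convexity to log-concavity of $R(\epsilon)$ are both correct.

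The gap is exactly the step you deferred: log-concavity of $R(\epsilon)$ is not proven, and as stated it is \emph{false} on part of the nominal feasible set $0\le\epsilon\le\epsilon_\text{th}<1$. Carrying out the computation you sketch, with $u = 1+\gamma_\text{th} = 1+\gamma_0\ln\frac{1}{1-\epsilon}$ one gets $(\ln 2)\,R' = \gamma_0/\bigl(u(1-\epsilon)\bigr)$ and $(\ln 2)\,R'' = \gamma_0(u-\gamma_0)/\bigl(u^2(1-\epsilon)^2\bigr)$, so the inequality $R\,R''\le (R')^2$ reduces to $(u-\gamma_0)\ln u \le \gamma_0$. Since $u\to\infty$ as $\epsilon\to 1^-$, the left-hand side diverges and the inequality fails for $\epsilon$ sufficiently close to $1$ (numerically, already around $\epsilon\approx 0.7$ for $\gamma_0=10$); indeed $1/R(\epsilon)$ itself ceases to be convex there, so no alternative argument can rescue unrestricted joint convexity in $(Q,\epsilon)$. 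To close the proof you must add an explicit hypothesis, e.g.\ $(1+\gamma_\text{th}-\gamma_0)\ln(1+\gamma_\text{th})\le\gamma_0$, for which a clean sufficient condition is $\gamma_\text{th}\le\gamma_0-1$, equivalently $\epsilon \le 1-e^{-(1-1/\gamma_0)}$ (about $\epsilon\lesssim 0.63$ at high mean SNR). That covers every practically relevant outage threshold, but without stating and proving this restriction the key lemma — and hence the theorem in your formulation — does not hold as claimed.
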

\begin{proof}
See Appendix B.
\end{proof}

Thus, the optimization problem $\textbf{P1a}$ is a convex optimization problem that can be solved by CVX~\cite{grant2014cvx},
and the obtained solution is a global optimum.
The optimal values of $\epsilon$ and  $Q$ can be used to estimate the optimal number of channel use required to transmit the compressed data from \eqref{eq:channel_use}.   

\subsection{Latency-Constrained Uplink}

In another type of scenarios of a fixed latency budget, say $T_\text{th}$, is allotted.
The optimization problem for this case is to minimize the outage, formulated as follows
\begin{equation}
\nonumber 
\begin{aligned}
\textbf{P2}: \min_{N_\text{tx}, Q}  \;  \epsilon, \; \; \text{s.t.} \; {\textbf{C3}}; \; {\textbf{C4}}: F_{T}(t=T_\text{th}) \geq  \rho_\text{th}; 
             {\textbf{C5}}: \epsilon \geq 0
\end{aligned}  
\end{equation} 
Constraint ${\textbf{C3}}$ has been already introduced and indicates the range of compression ratio.
${\textbf{C4}}$ ensures the reliability of transmission in stochastic sense within the latency budget $T_\text{th}$.
${\textbf{C5}}$ restricts the outage level to be greater than $0$. 

Again, $\textbf{P2}$ is not a convex optimization problem, since $\textbf{C4}$ is not a convex function (Theorem 1). 
Using the inverse of the CDF of the Gamma distribution, ${\textbf{C4}}$ can be rewritten as
\begin{equation} \label{eq:t_th_v}
 \frac{1}{R(\epsilon)}   =  \frac{Q}{D_\text{f} T_0} 
\left(  T_\text{th} - \frac{\tau_0 D_\text{f}}{\kappa f_\text{R}} C(Q) \right)  = U(Q)  
\end{equation}
where $\tau_0 = \tau(T_\text{th})$.

Minimizing $\epsilon$ is equivalent to maximizing $\frac{1}{R(\epsilon)}$ since $R(\epsilon)$ is an increasing function of $\epsilon$ (see Remark 1).
Thus, using \eqref{eq:t_th_v}, the optimization problem $\textbf{P2}$ becomes 
\begin{equation}
\nonumber 
\begin{aligned}
 & \textbf{P2a}: \max_{Q}  \;   U(Q), \;\; \text{s.t.} \;   {\textbf{C3}}   \; \mbox{and} \;   {\textbf{C4}}  
\end{aligned}  
\end{equation} 

\begin{theorem}
 $U(Q)$  is a concave function of $Q$. 
\end{theorem}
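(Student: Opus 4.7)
The plan is to work directly with the second derivative of $U(Q)$, exploiting the simple exponential form of $C(Q)$. Substituting $C(Q) = \exp(Q\psi) - \exp(\psi)$ into the definition, I would first rewrite
\begin{equation*}
U(Q) = \frac{T_\text{th}}{D_\text{f} T_0}\,Q + \frac{\tau_0 \exp(\psi)}{\kappa f_\text{R} T_0}\,Q - \frac{\tau_0}{\kappa f_\text{R} T_0}\,Q \exp(Q \psi).
\end{equation*}
Noting that the first two terms are linear in $Q$ and hence contribute nothing to the second derivative, concavity of $U$ reduces to concavity of the single term $-Q \exp(Q\psi)$, scaled by a positive constant.

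Next I would compute the derivatives of $g(Q) := Q\exp(Q\psi)$ directly:
\begin{equation*}
g'(Q) = (1+Q\psi)\exp(Q\psi), \qquad g''(Q) = \psi(2 + Q\psi)\exp(Q\psi).
\end{equation*}
Since $\psi > 0$ and, by constraint $\textbf{C3}$, $Q \geq 1 > 0$, both factors $\psi$ and $(2+Q\psi)$ are strictly positive, so $g''(Q) > 0$. Hence $g$ is strictly convex on the feasible domain, which makes $-g$ strictly concave. Multiplying by the positive constant $\tau_0/(\kappa f_\text{R} T_0)$ preserves concavity, and adding the linear part preserves it as well, so $U''(Q) < 0$ on the feasible set and the claim follows.

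The argument is short and essentially mechanical; the only thing to be mindful of is verifying that the multiplicative constant in front of $-Q\exp(Q\psi)$ is indeed positive (which requires $\tau_0 > 0$, guaranteed because $\tau_0 = F_T^{-1}(\rho_\text{th},\kappa)$ is a quantile of a Gamma distribution on $(0,\infty)$ and hence positive). I do not anticipate any real obstacle: the potentially delicate term is isolated by pulling out the linear pieces, and the resulting univariate function $Q\exp(Q\psi)$ is manifestly log-convex, so its convexity is immediate from a one-line second-derivative computation.
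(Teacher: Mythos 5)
Your proof is correct and is essentially the paper's own argument: both reduce to the observation that $U''(Q) = -\tfrac{\tau_0}{\kappa f_\text{R} T_0}\,(2\psi + Q\psi^2)\exp(Q\psi) < 0$; your preliminary step of stripping off the linear terms is only a presentational reorganization of the same second-derivative computation. One small correction: your closing remark that $Q\exp(Q\psi)$ is ``manifestly log-convex'' is false --- $\log\bigl(Q\exp(Q\psi)\bigr) = \log Q + Q\psi$ is concave, so the function is log-\emph{concave} --- but this aside is not needed, since your direct computation $g''(Q) = \psi(2+Q\psi)\exp(Q\psi) > 0$ already establishes convexity of $g$; simply delete the log-convexity claim.
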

\begin{proof}
See Appendix C.
\end{proof}

The problem $ \textbf{P2a}$ is concave, and its solution will provide the global optimum. 
The optimal value of outage $\epsilon_\text{opt}$ can be found using the optimal value of $Q$ from \eqref{eq:t_th_v} and  \eqref{eq:rate_epsilon}.

\begin{figure}[!h]
\center  
\subfigure[Latency]
{{\includegraphics[width=0.75\columnwidth]{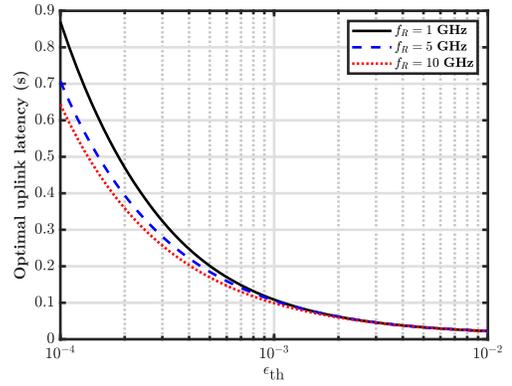}}}
\quad 
\subfigure[Compression ratio]
{{\includegraphics[width=0.75\columnwidth]{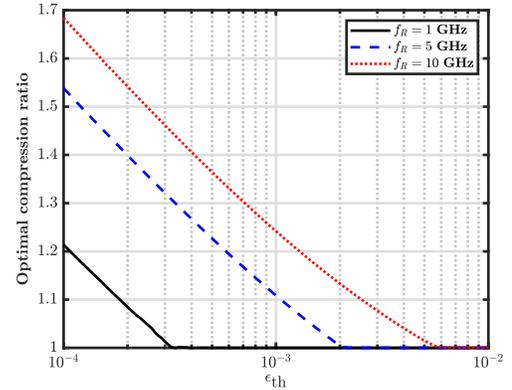}}}
 \quad 
 \subfigure[ Latency fraction for compression ]
{{\includegraphics[width=0.75\columnwidth]{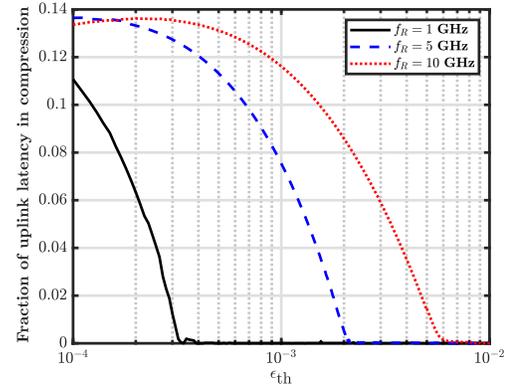}}} 
\caption{ Variation of the optimal system parameters against outage for outage-constrained transmission with $\rho_\text{th}=0.95$. }
\label{fig:reliability_aware_tx} 
\end{figure}

\section{{Numerical Results}}
\label{sec:evaluation}


We illustrate the analysis presented in the previous sections through numerical evaluations.
The values of the parameters are: $\mathcal{K}_0=-27$~dB, $d = 2$~km, $B=10$~MHz, $N_0=-110$~dBm, $T_0=0.5$~$\mu$s, $P_\text{tx}=0.5$~W, $\kappa=1.5$, $\Psi=3.5$, $D_\text{f}=1$~Mb, and $f_\text{R}=\{1, 5, 10 \}$~GHz. 

\subsection{Optimal System Design}

For outage-constrained system design, the optimal system parameters as functions of the outage probability threshold $\epsilon_\text{th}$ are shown in Fig.~\ref{fig:reliability_aware_tx}, for different clock speed of the device’s processor, i.e., $f_\text{R}$.
Fig.~\ref{fig:reliability_aware_tx}(a) reveals that the optimal latency decreases with increase in outage threshold $\epsilon_\text{th}$. 
Likewise, the optimal compression ratio also decreases with the increase in outage threshold, as shown in Fig.~\ref{fig:reliability_aware_tx}(b).
Specifically, a lower tolerated outage demands for a lesser rate and, thus, a higher compression ratio.
When the the tolerated outage is increased, the rate can be increased and, thus, the compression ratio can be decreased.  
Fig.~\ref{fig:reliability_aware_tx}(b) also reveals that transmission without compression (i.e., $Q=1$) becomes the optimal strategy as $\epsilon_{th}$ increases; the value of $\epsilon_\text{th}$ when this happens depends on the $f_\text{R}$.
This is due to the fact that for higher $\epsilon_\text{th}$ it becomes more opportune to spend the time on transmission with higher rate than to spend it on compression.
The same fact is also illustrated in Fig.~\ref{fig:reliability_aware_tx}(c), showing the fraction of uplink latency used in compressing the data as function of outage threshold. 
It can be concluded that compression of raw data is \emph{not} always beneficial, but depends on the capabilities of the processor as well as on the tolerated outage level.


\begin{figure}[!t] 
\center 
\subfigure[Outage]
{{\includegraphics[width=0.75\columnwidth]{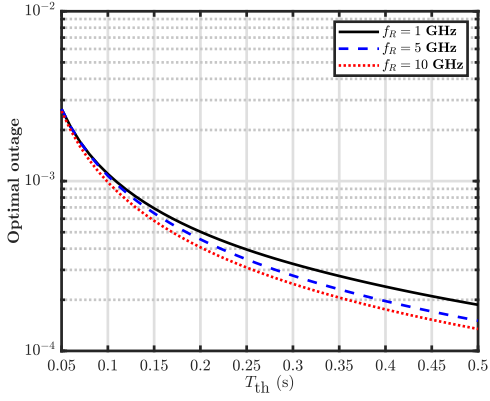}}}
\subfigure[Compression ratio]
{{\includegraphics[width=0.75\columnwidth]{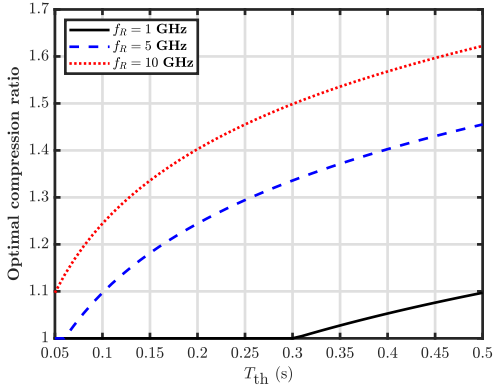}}}
\caption{ Variation of the optimal system parameters against latency budget $T_\text{th}$ for latency-constrained transmission with $\rho_\text{th}=0.95$. } 
\label{fig:latency_aware_tx} 
\end{figure}

For latency-aware system design, the optimal system parameters as functions of latency threshold $T_\text{th}$ are shown in Fig.~\ref{fig:latency_aware_tx}, for varying $f_\text{R}$.
The optimal value of the outage decreases with the increase in $T_\text{th}$, whereas the optimal compression ratio increases.
In other words, when the latency budget is high, more time can be invested in compression, which will lower the rate and allow for a transmission with a lower outage.  


\begin{figure}[!t]
\center 
\subfigure[Outage]
{{\includegraphics[width=0.75\columnwidth]{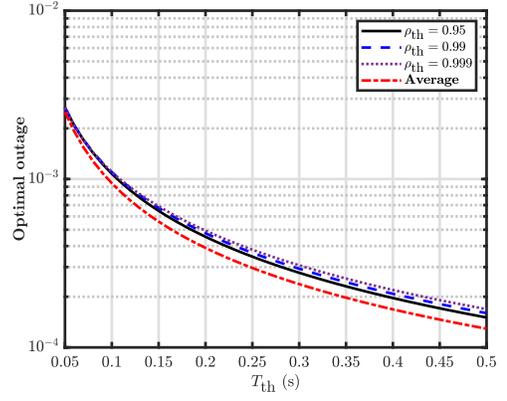}}} 
\quad 
\subfigure[Compression ratio]
{{\includegraphics[width=0.75\columnwidth]{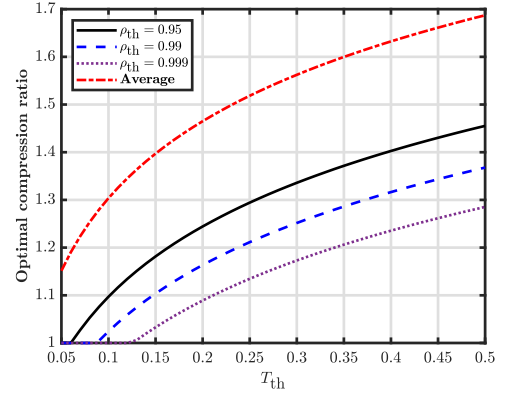}}}
\caption{ Comparison of proposed framework with expected sense for latency-aware transmission  with  $f_R=5$ GHz. } \label{fig:avg_lat_awr} 
\end{figure}

\subsection{Comparison with  Optimization in Expected Sense}
The works reported in ~\cite{onu_1,onu_3,NFV_JSAC,IoT_TI,MEC_TII, hybrid_caching_TI, TelSurg} analyze the latency in teleoperation systems in terms of its average value (i.e., in expected sense), in contrast to the approach taken in this paper. 
For the sake of comparison, here we reformulate  $\textbf{P2}$ to assume expected value of the uplink latency, and examine the obtained results.
For a given latency budget $T_\text{th}$, using \eqref{eq:lat_exp}) we get 
\begin{equation} 
\frac{1}{R(\epsilon)} =  \frac{Q}{T_0}  \left[ \frac{T_{th}}{D_f} - \frac{C(Q)}{f_R}  \right] = V(Q)
\end{equation}  
Thus, the optimization problem for latency-constrained uplink is formulated as follows 
\begin{equation}
\nonumber 
\begin{aligned}
& \textbf{A}: \max_{Q}  \; V(Q),  \;\; \text{s.t.} \;         {\textbf{C3}} \; \text{and} \; {\textbf{C4}} 
\end{aligned}  
\end{equation} 
$V(Q)$ is the concave function of $Q$; the proof is omitted due to space constraint. 
Thus, $\textbf{A}$ is a convex optimization problem that can be solved using CVX.
The optimal system parameters as functions of the latency threshold $T_\text{th}$ are shown in  Fig.~\ref{fig:avg_lat_awr} for (i) different values of $\rho_\text{th}$ for optimization problem $\textbf{P2a}$ and (ii) for problem $\textbf{A}$.
It may be noted that, when $T_\text{th}$ is fixed, the optimal outage in case of optimization in expected sense is lower than that with $\rho_\text{th} \geq 0.95$, see Fig.~\ref{fig:avg_lat_awr}(a).
This implies that in this case, the device will transmit with a lower rate and a higher compression ratio, as shown in Fig.~\ref{fig:avg_lat_awr}(b).
In other words, this approach may lead to over-provisioning.
Conversely, the required latency budget to achieve certain level of outage will be shorter for the system design in expected sense, than for the one treating latency as a RV.
For instance, to achieve an outage of $2 \times 10^{-4}$, the former approach should dimension the latency budget to be $347$ ms, whereas  the latency budget of $392$ ms, $418$ ms, and $432$ ms is required in the latter approach to achieve the reliability of $0.95, 0.99,$ and $0.999$, respectively.
In effect, this represents a case of under-provisioning and of a potential performance degradation.
We also note that that the system design treating outage in the expected sense (i.e., an optimization analogous to the one in $\textbf{P1a}$) will show similar shortcomings; the presentation of the corresponding results is omitted due to space constraints. 

\section{Concluding Remarks}
\label{sec:conclusions} 
{This work has been motivated by the general question about the optimal compression/transmission strategy in systems constrained by latency and reliability.
We have} introduced a framework to analyse the uplink latency of data transfer from the device to the Base Station that has a Mobile Edge Computing server. The data is compressed before  transmission.
We have analyzed the latency as a random variable 
and investigated different trade-offs and achievable performance between latency, link outage, and transmission reliability.
We have also shown the shortcomings of the design approaches that treat latency via its expected value.
Our future work includes the latency analysis of the closed-loop control systems, for which the analysis presented in this paper constitutes a building block.

\section*{Ackowledgment}
 This work is supported by the  European Horizon 2020 project    Tactility  (grant agreement number 856718).

\ifCLASSOPTIONcaptionsoff
  \newpage
\fi

\appendix

\subsection{Proof of Theorem 1}
\label{lemma_1} 

The second derivative of $F_T(t)$ is given as 
{\footnotesize{ 
\begin{align}
\nonumber 
\frac{d^2 F_T(t)}{dt^2} = & \frac{1}{\Gamma(\kappa)} \left(e^{-(t - \frac{D_\text{f}}{Q R(\epsilon)}T_0)}\right)  \left(t - \frac{D_\text{f}}{Q R(\epsilon)}T_0\right)^{\kappa-1} \times \\
& \left( -1 + \frac{\kappa-1}{t - \frac{D_\text{f}}{Q R(\epsilon)}T_0} \right) \nonumber
\end{align}
}} 
Here $t - \frac{D_\text{f}}{Q R(\epsilon)}T_0 > 0$  because the domain of definition of the Gamma distribution is positive. 
Thus, $\frac{d^2 F_T(t)}{dt^2} > 0$ for $t < \kappa -1 + \frac{D_\text{f}}{Q R(\epsilon)} T_0 $ and $\frac{d^2 F_T(t)}{dt^2} < 0$ for $t > \kappa -1 + \frac{D_\text{f}}{Q R(\epsilon)}T_0$, and $F_T(t)$ is neither a convex nor a convex function of $t$.

\subsection{Proof of Theorem 2}
\label{lemma_1} 

The Hessian matrix of $W(Q, \epsilon)$ is given as 
{\footnotesize{  
\begin{equation}
\nonumber 
 \mathcal{H}_1  = 
\begin{bmatrix}
\frac{\delta^2 W }{\delta Q^2}  & \frac{\delta^2 W }{\delta Q \delta \epsilon}  \\
 \frac{\delta^2 W }{\delta \epsilon \delta Q}  &  \frac{\delta^2 W }{\delta \epsilon^2}
\end{bmatrix} 
\end{equation} 
}} 
As mentioned in Remark \ref{rem:rate_reliability}, $R(\epsilon)$ is an increasing function of $\epsilon$. Therefore, for the purpose of this analysis, differentiating with respect to $\epsilon$ is the same as differentiating with respect to $R(\epsilon)$.
Thus, we can write the Hessian matrix as follows
{\footnotesize{
\begin{equation}
\nonumber 
 \mathcal{H}_2  = 
\begin{bmatrix}
\frac{\delta^2 W }{\delta Q^2 }   & \frac{\delta^2 W  }{\delta Q \delta R(\epsilon)} \\
 \frac{\delta^2 W }{\delta R(\epsilon)  \delta Q}  &  \frac{\delta^2 W }{\delta R(\epsilon)^2} 
\end{bmatrix} 
\end{equation} 
}} 
The elements of Hessian matrix are given as 
{\footnotesize{
\begin{equation} 
\nonumber 
\begin{split}
\frac{\delta^2 W }{\delta Q^2} & = D_f \left( \frac{\tau_0}{\kappa f_\text{R}} \Psi^2  \exp(\Psi Q) + \frac{2 T_0}{Q^3 R(\epsilon)}  \right) \\
\frac{\delta^2 W  }{\delta Q \delta \epsilon} & =  \frac{\delta^2 }{\delta \epsilon \delta Q} t   = \frac{D_\text{f} T_0}{Q^2 R^2(\epsilon)} \\
\frac{\delta^2 W }{\delta R(\epsilon)^2}  & =  \frac{2 D_\text{f} T_0}{Q R^3(\epsilon)} \\ 
\end{split}
\end{equation} }}
The determinant of Hessian matrix is given as {\footnotesize{
\begin{equation} 
\nonumber 
\begin{split}
|\mathcal{H}_2|  & = \frac{\delta^2 W }{\delta Q^2}  \frac{\delta^2 W }{\delta R(\epsilon)^2} - \left( \frac{\delta^2 W }{\delta Q \delta \epsilon} \right)^2   \\
 & =   \frac{D_\text{f}^2 T_0}{Q R^3(\epsilon)} \left(  \frac{2 \tau_0}{\kappa f_\text{R}} \Psi^2 \exp(\Psi Q) + \frac{3 T_0}{Q^3 R(\epsilon)} \right)    
 \end{split}
\end{equation} }}
Observe that $|\mathcal{H}_2|$ is always positive and hence $W(Q, \epsilon)$ is a convex function of $Q$ and $\epsilon$.

\subsection{Proof of Theorem 3}
\label{lemma_1}

The first derivative of $U(Q)$ is given as {\footnotesize{
\begin{equation}
\nonumber 
\frac{d U(Q)}{dQ}  = \frac{1}{D_\text{f} T_0} \left[   T_\text{th} - \frac{\tau_0 D_\text{f}}{\kappa f_\text{R}} \left( (1+Q \Psi) \exp(Q\Psi) + \exp(\Psi)  \right)  \right]
 \end{equation} }}
The second derivative of $U(Q)$ is given as {\footnotesize{
\begin{equation}
\nonumber 
\frac{d^2 U(Q)}{dQ^2}  =  -\frac{\tau_0}{\kappa T_0 f_\text{R}} (2 \Psi + Q \Psi^2) \exp(\Psi Q) 
 \end{equation} }}
Observe that $\frac{d^2 U(Q)}{dQ^2} <0$, which proves that $U(Q)$ is a concave function of $Q$.

\bibliographystyle{IEEEtran}
\bibliography{ref_final}

\begin{thebibliography}{10}
\providecommand{\url}[1]{#1}
\csname url@samestyle\endcsname
\providecommand{\newblock}{\relax}
\providecommand{\bibinfo}[2]{#2}
\providecommand{\BIBentrySTDinterwordspacing}{\spaceskip=0pt\relax}
\providecommand{\BIBentryALTinterwordstretchfactor}{4}
\providecommand{\BIBentryALTinterwordspacing}{\spaceskip=\fontdimen2\font plus
\BIBentryALTinterwordstretchfactor\fontdimen3\font minus
  \fontdimen4\font\relax}
\providecommand{\BIBforeignlanguage}[2]{{%
\expandafter\ifx\csname l@#1\endcsname\relax
\typeout{** WARNING: IEEEtran.bst: No hyphenation pattern has been}%
\typeout{** loaded for the language `#1'. Using the pattern for}%
\typeout{** the default language instead.}%
\else
\language=\csname l@#1\endcsname
\fi
#2}}
\providecommand{\BIBdecl}{\relax}
\BIBdecl

\bibitem{Fettweis_TI}
G.~P. {Fettweis}, ``The tactile {I}nternet: {A}pplications and challenges,''
  \emph{IEEE Veh. Technol. Mag.}, vol.~9, no.~1, pp. 64--70, 2014.

\bibitem{effect_of_delay}
Z.~Shi \emph{et~al.}, ``Effects of packet loss and latency on the temporal
  discrimination of visual-haptic events,'' \emph{IEEE Trans. Haptics}, vol.~3,
  no.~1, pp. 28--36, 2010.

\bibitem{onu_1}
M.~Chowdhury and M.~Maier, ``Local and nonlocal human-to-robot task allocation
  in fiber-wireless multi-robot networks,'' \emph{IEEE Syst. J.}, vol.~12,
  no.~3, pp. 2250--2260, 2018.

\bibitem{onu_3}
A.~Ebrahimzadeh and M.~Maier, ``Delay-constrained teleoperation task scheduling
  and assignment for human-machine hybrid activities over fiwi enhanced
  networks,'' \emph{IEEE Trans. Netw. Service Manag.}, vol.~16, no.~4, pp.
  1840--1854, 2019.

\bibitem{NFV_JSAC}
Z.~Xiang \emph{et~al.}, ``Reducing latency in virtual machines: {E}nabling
  tactile internet for human-machine co-working,'' \emph{IEEE J. Sel. Areas
  Commun.}, vol.~37, no.~5, pp. 1098--1116, 2019.

\bibitem{IoT_TI}
X.~{Ge}, R.~{Zhou}, and Q.~{Li}, ``{5G NFV}-based tactile {I}nternet for
  mission-critical {IoT} services,'' \emph{IEEE Internet Things J.}, vol.~7,
  no.~7, pp. 6150--6163, 2020.

\bibitem{MEC_TII}
M.~{Aazam}, K.~A. {Harras}, and S.~{Zeadally}, ``Fog computing for {5G} tactile
  industrial {I}nternet of things: {QoE}-aware resource allocation model,''
  \emph{IEEE Trans. Ind. Informat}, vol.~15, no.~5, pp. 3085--3092, 2019.

\bibitem{hybrid_caching_TI}
J.~Xu, K.~Ota, and M.~Dong, ``Energy efficient hybrid edge caching scheme for
  tactile internet in 5g,'' \emph{IEEE Trans. Green Commun. Netw.}, vol.~3,
  no.~2, pp. 483--493, 2019.

\bibitem{TelSurg}
S.~Sedaghat and A.~H. Jahangir, ``{RT-TelSurg}: {R}eal time telesurgery using
  {SDN}, fog, and cloud as infrastructures,'' \emph{IEEE Access}, vol.~9, pp.
  52\,238--52\,251, 2021.

\bibitem{jsac_comp_model}
X.~Li \emph{et~al.}, ``Wirelessly powered crowd sensing: {J}oint power
  transfer, sensing, compression, and transmission,'' \emph{IEEE J. Sel. Areas
  Commun.}, vol.~37, no.~2, pp. 391--406, 2019.

\bibitem{MEC_random_var}
W.~Yuan and K.~Nahrstedt, ``Energy-efficient soft real-time {CPU} scheduling
  for mobile multimedia systems,'' \emph{SIGOPS Oper. Syst. Rev.}, vol.~37,
  no.~5, p. 149–163, Oct. 2003.

\bibitem{MEC_gamma_1}
D.~Han \emph{et~al.}, ``Offloading optimization and bottleneck analysis for
  mobile cloud computing,'' \emph{IEEE Trans. Commun.}, vol.~67, no.~9, pp.
  6153--6167, 2019.

\bibitem{MEC_gamma_2}
S.~Jošilo and G.~Dán, ``Selfish decentralized computation offloading for
  mobile cloud computing in dense wireless networks,'' \emph{IEEE Trans. Mobile
  Comput.}, vol.~18, no.~1, pp. 207--220, 2019.

\bibitem{compression_survey}
\BIBentryALTinterwordspacing
{A. Van De Ven. (2017)}. linux os data compression options: Comparing behavior.
  [Online]. Available:
  \url{https://clearlinux.org/blogs/linux-osdata-compression-options-comparing-behavior}
\BIBentrySTDinterwordspacing

\bibitem{comm_lett_comp}
J.-B. Wang \emph{et~al.}, ``Joint optimization of transmission bandwidth
  allocation and data compression for mobile-edge computing systems,''
  \emph{IEEE Commun. Lett.}, vol.~24, no.~10, pp. 2245--2249, 2020.

\bibitem{goldsmith2005wireless}
A.~Goldsmith, \emph{Wireless communications}.\hskip 1em plus 0.5em minus
  0.4em\relax Cambridge university press, 2005.

\bibitem{grant2014cvx}
M.~Grant and S.~Boyd, ``Cvx: Matlab software for disciplined convex
  programming, version 2.1,'' 2014.

\end{thebibliography}

\end{document}